\let\OLDthebibliography\thebibliography
\renewcommand\thebibliography[1]{
  \OLDthebibliography{#1}
  \setlength{\parskip}{0pt}
  \setlength{\itemsep}{4pt plus 0.3ex}
}
\tikzset{ext/.style={circle, draw,inner sep=1pt, minimum size=5pt},
         int/.style={circle,draw,fill,inner sep=1pt, minimum size=5pt},nil/.style={inner sep=1pt}}
\tikzset{exte/.style={circle, draw,inner sep=3pt},inte/.style={circle,draw,fill,inner sep=3pt}}
\tikzset{diagram/.style={matrix of math nodes, row sep=3em, column sep=2.5em, text height=1.5ex, text depth=0.25ex}}
\tikzset{diagram2/.style={matrix of math nodes, row sep=0.5em, column sep=0.5em, text height=1.5ex, text depth=0.25ex}}
\theoremstyle{definition} \newtheorem{definition}{Definition}[section]
\newtheorem{lemma}[definition]{Lemma}
\newtheorem{theorem}[definition]{Theorem}
\newtheorem{prop}[definition]{Proposition}
\theoremstyle{definition} \newtheorem{remark}[definition]{Remark}
\theoremstyle{definition} 
\theoremstyle{definition} 
\theoremstyle{definition} \newtheorem{example}[definition]{Example}
\theoremstyle{definition}
\newcommand{\ZZ}{\mathbb{Z}}
\newcommand{\RR}{\mathbb{R}}
\newcommand{\CC}{\mathbb{C}}
\newcommand{\PP}{\mathbb{P}}
\newcommand{\TT}{\mathbb{T}}
\renewcommand{\gg}{\mathfrak{g}}
\newcommand{\hh}{\mathfrak{h}}
\newcommand{\su}{\mathfrak{su}}
\newcommand{\SU}{\mathrm{SU}}
\newcommand{\spin}{\mathrm{Spin}}
\def\d{{\rm d}}
\newcommand{\sub}{\subseteq}
\newcommand{\bb}[1]{\mathbb{#1}}
\newcommand{\mr}[1]{\mathrm{#1}}
\newcommand{\mc}[1]{\mathcal{#1}}
\newcommand{\mf}[1]{\mathfrak{#1}}
\DeclareMathOperator{\im}{im}
\DeclareMathOperator{\id}{id}
\DeclareMathOperator{\dvol}{dvol}
\def\Sol{\mc{S}\rm{ol}}
\def\xto{\xrightarrow}
\title{Spontaneous symmetry breaking:\\ a view from derived geometry}
\author{Chris Elliott \and Owen Gwilliam}
\date{\today}
\begin{document}

\maketitle

\begin{abstract}
We examine symmetry breaking in field theory within the framework of derived geometry, as applied to field theory via the Batalin-Vilkovisky formalism.  Our emphasis is on the standard examples of Ginzburg--Landau and Yang--Mills--Higgs theories and is primarily interpretive.  The rich, sophisticated language of derived geometry captures the physical story elegantly, allowing for sharp formulations of slogans (e.g., for the Higgs mechanism, that the unstable ghosts feed the Goldstone bosons to a hungry, massless gauge boson).  Rewriting these results in the BV formalism provides, as one nice payoff, a reformulation of 't Hooft's family of gauge-fixing conditions for spontaneously broken gauge theory that behaves well in the $\xi \to \infty$ limit.
\\ \ \\
\emph{Keywords:} BV formalism, formal derived geometry, symmetry breaking, Higgs mechanism, gauge theory.

\noindent \emph{MSC:} 17B81, 70S15, 81T13
\end{abstract}

Symmetry breaking is a central concept in field theory,
especially gauge symmetry breaking as developed by Anderson \cite{Anderson}, Higgs \cite{Higgs1,Higgs2}, Englert, Brout \cite{EnglertBrout}, Guralnik, Hagen and Kibble \cite{GuralnikHagenKibble},
which plays a key role in explaining superconductivity and the massive gauge bosons of the strong and weak forces.
Our goal here is to revisit this notion, and these examples, using the framework of derived geometry.
In essence we translate the manipulations used in physics textbooks into the geometric language of stacks and tangent complexes,
and we find that this articulation clarifies some aspects.
We also expect it will help mathematicians better understand the concept;
at the least, we feel that any notion so crucial to physics deserves to be examined from every possible angle.

\section{An overview of the problem and our perspective}

We start by sketching the basic situation mathematically;
in the following sections we will unpack the quintessential example of symmetry breaking in both physical and mathematical styles.

Recall the broad outlines of classical field theory.
We begin with a space $\mc{F}$ whose elements we call the {\em fields}; 
typically $\mc{F}$ consists of sections of a fiber bundle over a (super)manifold.
There is also a function $S$ on the space $\mc{F}$ that we call the {\em action functional},
and according to the principle of least action, 
the space of true interest is the critical locus of $S$.
More accurately, one applies variational calculus to $S$
to construct a space $\Sol$ of solutions to the Euler-Lagrange equations.

We pause to remark that the term ``space''  means for us a derived stack,
but the reader need not be conversant with this notion.
For physicists we mention that, in essence, our geometric language will encode the BV/BRST formalism for field theory,
and little confusion will arise if they interpret things in this way.
For mathematicians we wish to emphasize that derived stacks provide a technical notion adequate for field theory,
although its application in physics is early in its development.

\begin{remark}
For experts in derived geometry, we note that derived algebraic geometry is not appropriate for (most of) field theory.
Instead it is usually more relevant to use derived differential geometry.
The focus in this paper is on formal aspects of derived geometry, however, 
and the formal aspects are the same in either setting.
For this reason we are willing to be somewhat cavalier about global aspects,
leaving a more precise analysis for future work~\cite{DDG}.
\end{remark}

Now suppose that a group $\mc{G}$ acts on $\mc{F}$ making the action functional $S$ equivariant for the $\mc{G}$-action.

There are two distinct ways in which we can treat this action.
\begin{enumerate}
\item One can view the action as a {\em symmetry} of the space $\Sol$.
\item One can form the quotient spaces $\mc{F}/\mc{G}$ and $\Sol/\mc{G}$ as derived stacks, that is (informally speaking) we view fields that are related by the action of $\mc G$ as ``indistinguishable''.  We say here that $\mc{G}$ is a {\em gauge symmetry}.
\end{enumerate}

\begin{remark}
In the most central examples for physics, including the Yang--Mills--Higgs theories that we will consider in this paper, gauge symmetries are \emph{local}, in the sense that the group action is allowed to vary over spacetime.  For instance, $\mc G$ may be the group of smooth maps from the spacetime manifold $X$ into a compact Lie group $G$.  Such local symmetries can be obtained from theories with a global $G$-symmetry by coupling to a new field: a connection on a principal $G$-bundle.  The combined procedure, where one first couples to connections then forms the quotient by local symmetries, is known as ``gauging'' a global symmetry.

The distinction that we mean to draw out here, however, is not between global and local symmetries, but between theories with a $\mc G$-action on the fields, and theories where the space of fields is a quotient stack.  Although in many physically interesting examples of classical field theories local symmetries are also gauge symmetries, these two dichotomies are not intrinsically the same.
\end{remark}

The role of $\mc{G}$ is quite different in these two cases,
but there are parallel questions to ask.
In particular, given the group action, it is natural to ask about the stabilizer at a given point $x$ in $\Sol$.
(For the gauge symmetry case, this stabilizer is thought of as the group of ``internal symmetries'' (or automorphisms) of the point $[x]$ in $\Sol/\mc{G}$.)
As one might expect, the behavior of the physical state of the system encoded by a point varies significantly depending on its stabilizer.
The key to understanding symmetry breaking is to analyze the orbits and their stabilizers:
in a sense that we make precise, symmetry breaking is about how the physics changes as the stabilizers change.

Indeed, central ideas like Goldstone bosons and the Higgs mechanism will arise just by studying the linearization of this question about orbits and stabilizers,
but linearization is more sophisticated in the setting of derived geometry. 
To whit, a tangent space to a derived stack -- i.e., its linearization at some point -- is encoded by a cochain complex known as the {\em tangent complex}.
(Physicists will have seen such complexes when describing the free theories underlying the BV/BRST descriptions of gauge theories.)
In the setting of a broken symmetry, this enhancement matches in a simple way with the usual discussion of Goldstone bosons.
We analyze this case primarily to develop the reader's understanding of how the derived formalism captures the standard approach.

In the setting of a broken gauge symmetry, however, the derived formalism provides a nice mathematical mechanism. The key point is that cochain complexes are meaningful in derived geometry only up to quasi-isomorphism,
and this feature lets one search for another theory, perhaps with a simpler description, whose tangent complex is quasi-isomorphic to that of the initial theory.
For instance, when a Higgs boson ``breaks'' a Yang--Mills theory with group $G$ to another Yang--Mills theory with subgroup~$H$, we interpret this to mean that there is a quasi-isomorphism of linearizations between the $G$-theory and the $H$-theory.

\begin{remark}
This perspective on gauge symmetry breaking suggests a moral: 
for a full understanding of gauge theory, 
it is important to remember the data of the moduli \emph{stack} $\Sol/\mc{G}$ and not only the coarse moduli space.  
To phrase the moral in BV/BRST language, 
we benefit from considering the full space of BRST fields, including ghosts, and without immediately choosing a gauge fix.  
Indeed, when we break the gauge symmetry from $G$ to $H$, the presence of the ghosts (i.e. the generators of infinitesimal symmetries in the tangent space to the stack) will be essential to construct a quasi-isomorphism to the $H$-theory.
\end{remark}

We now put this perspective to work in a few classic examples.
We will assume the reader has a passing familiarity with either the BV/BRST formalism or the language of stacks.  In fact, these two formalisms are related, as we will very briefly explain in Section~\ref{sec: BV review}: the BV/BRST formalism can be thought of as encoding the formal geometry near a point in a derived stack.  For an introduction to the language of derived stacks, we refer the reader to To\"en's survey articles \cite{ToenOverview,ToenSurvey}.  The interpretation of the BV/BRST formalism as encoding formal derived geometry is discussed in~\cite[Chapter 5]{CG2}.

\begin{remark}
The perspective we will take on the Higgs mechanism is through perturbation theory, which should be distinguished from the axiomatic notion of the Higgs phase of a gauge theory, in terms of the asymptotic behaviour of the expectation values of Wilson loop operators \cite{tHooftPhase} (see also the discussion in~\cite[Lecture 7]{WittenIAS}).
\end{remark}

\subsection{The BV formalism and formal derived geometry}
\label{sec: BV review}

Perturbation theory around a fixed solution --- in the derived framework mentioned above ---
recovers the classical field theory in the BV formalism.
For an extensive discussion and motivation of this idea, which is not tautological, see \cite{CG2}.\footnote{David Carchedi and the second author \cite{DDG} are providing a global setting in derived differential geometry that formalizes completely this idea.}
Here we give a very brief review, 
as we build on this perspective.

In terms of global geometry, we pick a point in~$\Sol/\mc{G}$ and wish to describe the deformation theory (or formal neighborhood) of this chosen solution.
In physical language, this amounts to understanding the perturbation theory of that solution.
In derived geometry, such deformation theory is always encoded by a shifted $L_\infty$ algebra,\footnote{An $L_\infty$ algebra is to Lie algebras as an $A_\infty$ algebra is to associative algebras. These provide a flexible, powerful tool that generalizes Lie theory from vector spaces to cochain complexes. A {\em shifted} $L_\infty$ algebra is a cochain complex that is an $L_\infty$ algebra when shifted up by one degree. These appear naturally in physics \cite{Stasheff, BFLS}.}
a folklore result now established in a global setting \cite{Hinich, LurieSAG, Pridham}.
Remarkably, when one works out the shifted $L_\infty$ algebra associated to perturbations around a solution for a classical field theory,
one finds a graded vector space consisting of extended fields, namely the fields and their antifields, as well as --- if needed --- ghosts, antighosts, and so on.
The brackets of the shifted $L_\infty$ algebra arise from the Taylor expansion of the action functional around the chosen solution.
This perfect dictionary between BV perturbation theory and derived deformation theory means that a physicist can think in terms of the extended action functional (including all the BV/BRST fields) while a mathematician can think of a shifted $L_\infty$ algebra, without any loss of insight.

As a quick example, consider a pure Yang--Mills theory on an $n$-dimensional oriented manifold $X$ for a compact Lie group $G$ and a choice of real-valued nondegenerate invariant pairing on its Lie algebra~$\gg$.
Consider perturbations of the solution given by the trivial flat connection on the trivial principal $G$-bundle over $X$,
which leads to the standard BV treatment of pure Yang--Mills theory. 
The usual graded vector space of fields is concentrated in cohomological degrees $-1,0,1$ and 2:
\[
\Omega^0(X, \gg)[1] \oplus \Omega^1(X, \gg) \oplus \Omega^{n-1}(X, \gg)[-1] \oplus \Omega^n(X, \gg)[-2].
\]
If one shifts the degrees up by one, it is easy to see a dg Lie algebra structure:
$\gg$-valued 0-forms act on the other forms by wedging the form components and bracketing the $\gg$-components,
and there is a differential that 
\begin{itemize}
\item applies the exterior derivative $\d$ to a 0-form and an $n-1$-form but
\item applies $\d \star \d$ to a 1-form.
\end{itemize}
All this data can be read from the quadratic part of the Yang--Mills action functional along with a knowledge of the gauge symmetries.
(Note that it is {\em not} contained in just the action functional.)  We can include additional brackets that further encode the cubic and quartic parts of the Yang--Mills action functional

The tangent complex $\TT_0$ at this trivial connection is precisely the underlying cochain complex of the dg Lie algebra.  The shift $\TT_0[-1]$ of this complex is equipped with an $L_\infty$ structure that includes the action of $\gg$-valued 0-forms on the other forms, in addition to degree 2 and 3 brackets encoding the part of the action functional of Yang--Mills theory of degree 3 and 4 respectively.  We discuss this explicitly in Appendix~\ref{bracket_appendix}.

\begin{remark}
The antifields and antighosts show up when studying variational calculus from a derived point of view: they generate spaces of relations between the ordinary fields, which impose the equations of motion. (See \cite{CG2} for a motivational discussion; \cite{DDG} will contain a systematic, global treatment.) 
\end{remark}

Now, instead let us consider perturbations of a solution given by a connection $\d + A$ on the trivial bundle, where now $A$ may be non-zero.
What happens here is that the differential changes:
\begin{itemize}
\item it is $\d_A = \d+A$ on a 0-form and an $n-1$-form and
\item it $\d_A \star \d_A$ on a 1-form,
\end{itemize}
but the Lie bracket is unchanged.

The tangent complex $\TT_A$ at this connection is precisely the underlying cochain complex of the deformed dg Lie algebra.
Note that the cohomology of this complex depends on~$A$.
One calls the connection $\d + A$ {\it irreducible} if $\mathrm H^{-1} = 0$, 
because it means the stabilizer is trivial (or at least discrete),
but reducible if there is nontrivial $\mathrm H^{-1}$.

\begin{remark}
Note that this condition is slightly weaker than the usual notion of irreducibility, which says that a connection $A$ on a principal $G$-bundle is irreducible if the holonomy group of $A$ is all of $G$.  The cohomology group $\mathrm H^{-1}$ is the Lie algebra of the centralizer of the holonomy group, so $\mathrm H^{-1}$ is trivial when the centralizer is discrete, which is weaker than being irreducible when $G$ has discrete center (see e.g.~\cite[pp 132--133]{DonaldsonKronheimer}.) 
\end{remark}

For the theories considered in this paper -- gauge theories, possibly coupled to fermions or a scalar theory -- 
an important feature holds:
the underlying graded vector space of fields is unchanged, no matter the choice of solution,
but the shifted $L_\infty$ brackets vary depending on the solution.
Hence the tangent complex at any solution consists of the fixed graded vector space of fields but with a differential depending on the solution.

\begin{remark}
Our approach uses the formal neighborhood of a solution and hence can only see infinitesimal information.
It cannot, for instance, distinguish between a gauge theory with group $SU(2)$ and a theory with group $SO(3)$.
As we will see below, many features of gauge symmetry breaking can be seen even when this group-level information is washed out,
but it can be interesting to remember a little more.
Indeed, it is believed that remembering the stabilizer {\em group}, and not just its Lie algebra, explains the appearance of  topological gauge theories with {\em discrete} gauge groups as phases of nontopological gauge theories (for a starting point, see \cite{GukKap}).
It is possible to refine our approach via derived geometry to accommodate this feature,
which is joint work in progress with David Carchedi.
In brief, given some solution $\phi_0 : \ast \to \Sol/\mc{G} =: \mc{X}$ to the equations of motion viewed as a point on a derived stack, 
the map $\phi_0$ factors canonically through its formal neighborhood $\mc{X}^{\wedge}_{\phi_0} \hookrightarrow \mc{X}$.
If $H = {\rm Stab}(\phi_0) \subset \mc{G}$, then we also have a map of derived stacks $\text{pt}/H \to \mc{X}$ through which $\phi_0$ factors.
The formal neighborhood of {\em this} map knows about both perturbation around $\phi_0$ and the relevant {\em group} of gauge symmetries~$H$.
\end{remark}

\subsection{An incomplete discussion of the literature}

The theory of symmetry breaking and the Higgs mechanism is a foundational topic in classical and quantum field theory,
appearing in every quantum field theory textbook. 
A particularly complete account appears in \cite{Weinberg2}, 
and we have referred to \cite{PesSch, Ryder} in preparing this paper.  
Mathematicians might appreciate a summary~\cite{Hamilton} aimed at them.  

Due to its importance, there is much work by mathematical physicists,
often rooted in algebraic or constructive quantum field theory, 
which has a far more analytic flavor than what we do here.
See~\cite{StrBook}  for a systematic treatment by Strocchi with many references.

Our approach, following the physics textbooks, is perturbative in nature.  
This approach is via a systematic application of the BV formalism, which derives from foundational work of Batalin and Vilkovisky \cite{BatalinVilkovisky} in the physics literature.
(We recommend Witten's succinct and illuminating treatment \cite{WittenAntibracket} as a nice starting point.)
Our perspective on the BV formalism is heavily informed by the work of Costello \cite{Costello},
who stressed its meaning and content from the perspective of derived geometry.
This view was pioneered by Stasheff and collaborators in the 1990s \cite{Stasheff, BFLS},
who emphasised the role of $L_\infty$ algebras.  
Recently, this reformulation of field theories using the language of $L_\infty$ algebras, 
particularly in the string field theory literature, is receiving active attention~\cite{HohmZwiebach, AHHL, CG2}.

In just this kind of setting, \cite{Zeitlin} gives an extensive examination of abelian Yang--Mills and Yang--Mills--Higgs theories, 
with a focus on relations to string field theory.
We strongly encourage the reader to read this paper as well, because it touches on a number of interesting issues.
In particular, Section 4.3 of {\it loc. cit} discusses how to see the abelian Higgs mechanism with a single complex scalar Higgs field,
writing the same complex as we do here and noting the acyclic subcomplex involving the ghosts.
Our paper extends and complements Zeitlin's treatment, 
particularly via the discussion of global derived geometry.

\section{Breaking a global symmetry}
\label{sec: global}

Let us consider a field theory on which a group $G$ acts by symmetries.
Since the space of all solutions is preserved by the action of $G$,
each solution $\phi_0$ lives on some $G$-orbit, which has the form $G/H$ with $H = {\rm Stab}(\phi_0)$.
Note that the variational derivative $\delta S$ of the action must vanish in the orbit directions, but not in the directions transverse to the orbit.

When we examine the formal neighborhood of $\phi_0$ in the space of all solutions,
it is natural to ask how the Lie algebra $\mf h = {\rm Lie}(H)$ acts on this formal neighborhood and whether it organizes and simplifies analysis of the theory.
At the level of the linearized problem, in the BV formalism,
we could ask how to decompose the tangent complex $\TT_{\phi_0}$ into irreducible $\hh$-representations.
Tangent directions to the orbit -- known as Goldstone modes -- must be $\hh$-invariant and are typically massless,
whereas the transverse directions transform in non-trivial representations of $\hh$, and will typically have interesting linearized equations of motion.

We will demonstrate these ideas using the example of Ginzburg--Landau theory, in part because this will be a useful example upon which to build in the next section, where we examine gauge symmetry breaking.  We will also discuss briefly how to think about the theory of twists of supersymmetric theories through the lens of global supersymmetry breaking.

\subsection{The Ginzburg--Landau example} \label{GL_section}

We will work on a smooth spacetime manifold $X$, equipped with a Riemannian metric.  Let $G$ be a compact Lie group, and let $R$ be a finite-dimensional $G$-representation equipped with a $G$-invariant symmetric pairing $\langle -, - \rangle \colon R^{\otimes 2} \to \RR$.
Our fields consist of a scalar field $\phi \in C^\infty(X) \otimes R$ and a potential function $V \colon R \to \RR$ that is $G$-invariant.  
Our theory has action functional
\[
S(\phi) = \int_{X} \left(|\d\phi|^2 + V(\phi)\right) \dvol,
\]
and the equations of motion are 
\[
\Delta \phi + \nabla V|_{\phi} = 0
\]
where $\Delta$ denote the Laplacian on $X$ (tensored with the identity on $R$) and $\nabla V$ denotes the gradient of $V$ as a function on $R$, 
which we then evaluate at the point~$\phi(x) \in R$.

For simplicity, we will focus on a solution $\phi_0$ that is constant, and hence is just given by a point in $R$.
By the equations of motion, we see that $\phi_0$ must be a critical point of the potential $V$ on $R$.
For the sake of concreteness, and because it is conventional in physics textbooks,
from now on we fix our potential to be the quartic functional
\[
V(r) = - \frac 14(|r|^2 - m^2)^2,
\]
where $m$ is a fixed real number. 
The critical points then lie on the sphere of radius~$m^2$, together with the point at the origin.

\begin{remark}
This example generalizes easily to a nonlinear $\sigma$-model where the target is a $G$-manifold $Y$.
When expanding perturbatively around a classical solution $\phi_0 \colon X \to Y$, 
one studies a field theory where a field $\phi$ is a smooth section of the vector bundle $\phi_0^* TY \to X$,
and the potential function is defined fiberwise but may vary along the base~$X$.
We will not describe how the symmetry breaking extends to this more general setting, but the same techniques we demonstrate apply.
\end{remark}

We first describe the BV theory expanded around the solution $\phi_0 = 0$,
although our focus will ultimately be on perturbative theories arising from other solutions.
The graded vector space of fields is
\[
\begin{matrix}
\Omega^0(X; R) \oplus \Omega^n(X; R)[-1].
\end{matrix}
\]
The tangent complex $\TT_0$ is
\[
\Omega^0(X; R) \xto{\Delta-m^2} \Omega^n(X; R)
\]
in degrees 0 and 1, which arises from the purely quadratic terms in the action when expanded at this trivial solution.

Now let us expand the action functional around $(\phi_0)$ where $\phi_0$ is constant and $|\phi_0|^2 = m^2$.
We will use the variable $\varphi$ for $\phi - \phi_0$ so that
\begin{align*}
S'(\varphi) 
&= S(\varphi + \phi_0) \\
&= \int_{X} \left(|\d\varphi|^2 - \frac 14(|\varphi + \phi_0|^2 - m^2)^2 \right) \dvol \\
&= \int_{X} \left(|\d\varphi|^2 - \frac 14(|\varphi|^2 + 2 \langle \phi_0, \varphi \rangle)^2 \right) \dvol \\
&= \int_{X} \left(|\d\varphi|^2 - (\langle \phi_0, \varphi \rangle)^2 - \langle \phi_0, \varphi \rangle |\varphi|^2 - \frac 14|\varphi|^4 \right) \dvol
\end{align*}
The quadratic part of the action functional now contains a term $(\langle \phi_0, \varphi \rangle)^2$ with very interesting properties.
Namely, decompose $R$ into $R' \oplus R''$ where $R'$ consists of vectors orthogonal to $\phi_0$ and $R'' = \RR\langle \phi_0 \rangle$ consists of vectors ``parallel'' to $\phi_0$.  Then we can write $\varphi$ as a sum $\varphi' + \varphi''$,
and we find
\[
(\langle \phi_0, \varphi \rangle)^2 = (\langle \phi_0, \varphi'' \rangle)^2.
\]
In other words, the $\varphi'$ component is {\em massless} while the $\varphi''$ component is {\em massive}.

Note that $R'$ contains the linearized $G$-orbit of $\phi_0$, in other words, no element of $G$ acts by rescaling $\phi_0$.  Indeed, if $\phi_0$ is an eigenvector of $g$ with eigenvalue $\lambda$, then by invariance of the pairing, $\langle \phi_0, \phi_0 \rangle = \langle g \phi_0, g\phi_0\rangle = \langle \lambda \phi_0, \lambda \phi_0 \rangle$, so $\lambda = \pm 1$.  This means that infinitesimal variations of the scalar field along the $G$-orbit are massless.
That is, the Goldstone bosons are components of the massless field~$\varphi'$.

The tangent complex $\TT_{\phi_0}$ is now just
\[
\xymatrix{
\Omega^0(X; R') \ar[r]^{\Delta} &\Omega^n(X; R') \\
\Omega^0(X; R'') \ar[r]^{\Delta - |\phi_0|^2 } &\Omega^n(X; R'') \\
},
\]
which arises from the purely quadratic terms in the action functional~$S'$.
The leftmost column is in degree zero, and the rightmost column is in degree one.

\subsection{Twisting a supersymmetric theory as global symmetry breaking} 
\label{susy_section}

The notion of \emph{twisting} a supersymmetric field theory, as introduced by Witten \cite{WittenTQFT}, can be seen as an instance of global supersymmetry breaking, at least in the form we articulate in this paper.  
In this section we will briefly give an outline of this idea in the BV formalism, following the approach taken in~\cite{ElliottSafronov, ESW}.

\begin{remark}
The procedure we will describe in this section is different from supersymmetry breaking as the term is usually used.  Rather than passing to perturbative field theory around a classical solution that is not supersymmetry-invariant,  
we will consider instead perturbation theory around a non-zero value of an element of the supersymmetry group, 
viewed as an auxiliary (or background) field.
\end{remark}

\begin{definition}
Fix a spinorial representation $\Sigma$ of $\spin(n)$, and a non-zero $\spin(n)$-equivariant linear map $\Gamma \colon \Sigma^{\otimes 2} \to \RR^n$, where $\RR^n$ carries the defining representation.  The \emph{supertranslation group} associated to the pair $(\Sigma, \Gamma)$ is the super Lie group whose underlying supermanifold takes the form $\RR^n \times \Pi \Sigma$, where $\Pi$ indicates the shift into odd degree, and where the group operation is given by the map $\Gamma$.
\end{definition}

\begin{definition}
A classical field theory on $\RR^n$ is \emph{supersymmetric} if it is equipped with a global action of a supertranslation group extending the action of $\RR^n$ by translations.
\end{definition}

Such theories have fermionic fields, so in the BV formalism we now have a bigrading $\ZZ \times \ZZ/2\ZZ$,
where the $\ZZ$ provides a cohomological (or ghost) grading and the $\ZZ/2\ZZ$ provides a boson/fermion grading. 

Consider a supersymmetric theory whose BV formulation is given by a shifted $L_\infty$ algebra $\bb T$, 
which encodes the perturbation theory around some fixed classical solution.  Informally, the \emph{twist} of $\bb T$ by an element $Q$ of $\Sigma$ is a theory where the action functional is modified so that it encodes how $Q$ acts on the theory.
Concretely, we deform the differential and $L_\infty$ brackets on~$\bb T$.  

Alternatively, we can rephrase twisting as a case of symmetry breaking.  
We extend the field content of the BV theory by including an ``extra field'' $Q$, 
namely as a constant function on spacetime valued in the vector space $\Sigma$
and placed in $\ZZ \times \ZZ/2\ZZ$-degree $(-1,1)$. 
We extend the action functional $S$ to a functional $S_Q$ that depends on this background field in $\Sigma$,
so that the $\Sigma$-action is encoded by this action functional.  
(Concretely, we have $\{S_Q, -\} = \{S, -\} + Q$.)  The result is that we can view this extended theory as a family of theories parametrized by $Q \in \Sigma$, 
or, in more sophisticated language, as a sheaf $\bb T_{\mc N}$ of $L_\infty$ algebras over the locus $\mc N$ inside the vector space $\Sigma$ consisting of elements $Q \in \Sigma$ such that $\Gamma(Q,Q)=0$.  
This setup lets us efficiently define twisting.

\begin{definition}
Let $Q \in \Pi \Sigma$ be an odd element of the supertranslation group such that $\Gamma(Q,Q) = 0$.  
The \emph{twist} of the theory $\bb T$ by $Q$ is the perturbative field theory $\bb T_Q$ around $Q$, i.e. the fiber of the coherent sheaf $\bb T_{\mc N}$ at the point $Q$ in $\mc N$.
\end{definition}

Note that it was essential that $Q$ was an odd symmetry, so that the vector space $\Sigma$ in our family was placed in even total degree.

In the paper \cite{ESW}, using this language, the first author, Safronov and Williams studied all the possible twists of supersymmetric Yang--Mills theories.  
For each such theory, they construct a perturbative equivalence of classical BV theories between the twisted theory and a more understandable field theory, typically a version of BF theory.  
These calculations are very similar to the calculations carried out in this paper, 
where an equivalence of cochain complexes is constructed by ``integrating out'' certain acyclic summands in the cochain complex associated to the twisted field theory.

\begin{remark}
By the procedure discussed above, we can think of a supersymmetric field theory as defining a sheaf (in fact, a constructible sheaf) 
of $L_\infty$ algebras over the variety $\mc N$ consisting of all square-zero elements in~$\Sigma$: 
this space $\mc N$ is called the \emph{nilpotence variety} in~\cite{EagerSaberiWalcher}.  
The twist by $Q$ is the fiber of this sheaf over the element $Q$. 
It is in this sense that we often refer to a twist as a ``sector'' of a supersymmetric field theory: 
it includes those quantities in the full family over $\mc N$ that are visible in the specified fiber.
\end{remark}

\section{Breaking a gauge symmetry}
\label{sec: gauge}

We now turn to the more interesting situation: breaking of gauge symmetries.
We will spend most of our time analyzing two classic examples, but before delving into them, we outline the basic setup.

In brief, consider a theory with gauge symmetries, which has a global derived stack $\Sol/\mc{G}$ of solutions up to gauge equivalence.
If one examines the BV theory at a solution $x$,
its linearization around $x$ corresponds to a free theory.
Let $\TT_x$ denote the cochain complex encoding this free theory.
In examples of gauge symmetry breaking, 
what one sees is that $\TT_x$ changes qualitatively as $x$ moves;
in fact, the cohomology $\mr H^\bullet(\TT_x)$ can jump.
As we vary $x$, some components of the space of fields may become Nakanishi--Lautrup fields --- to use physical language ---  
which means that fields are paired off with corresponding components in the space of ghosts.
In mathematical language, as one varies $x$, the complex $\TT_x$ may develop acyclic subcomplexes, 
which may be removed by passing to a quasi-isomorphic but smaller complex.

From the perspective of the global derived stack, one sees how the stabilizer of $x$ in $\mc G$ changes as $x$ varies.
After linearization, ghosts orthogonal to the stabilizer subgroup make some Goldstone bosons exact in the cochain complex.
Thus, only the ghosts of the stabilizer subgroup and a subspace of the bosons survive at the level of cohomology.
This perspective provides a mechanism for how the massless gauge boson becomes massive by eating the Goldstone boson: {\it the unstable ghosts {\em feed} Goldstone bosons to that hungry, massless gauge boson}.

We will analyze this mechanism in an example of primary interest to physics:
in a Yang--Mills theory coupled to a boson with an interesting potential,
the perturbative theory around some points is equivalent (that is, the tangent complex is quasi-isomorphic) to a theory in which the gauge bosons are massive. 
Below we revisit this classic example in the BV formalism.
At the linearized level, the statement is quite simple: 
there is an inclusion $T \hookrightarrow \TT_x$ of cochain complexes (in fact, a quasi-isomorphism!)
where $T$ has connection 1-forms in degree 0, 
but the differential acting on them has a mass term, i.e. the differential operator has a zeroth order coefficient.
Conceptually, one is applying a change of coordinates to the derived space of fields
such that the cohomologically relevant information is encoded in the subspace $T$ of the fields.
This quasi-isomorphism extends to an equivalence of the full BV theories as perturbative classical theories.
In short, {\em the Higgs mechanism is an example of an equivalence of BV theories}.

We will demonstrate these ideas using the Yang--Mills--Higgs model,
which is the classic setting for gauge symmetry breaking.
We will briefly recall this theory as it is usually specified before describing it in the BV formalism 
and analyzing symmetry breaking with an abelian and a non-abelian gauge group.

\subsection{The Yang--Mills--Higgs model}

We will study here a Yang--Mills--Higgs theory with compact gauge group $G$ and representation $R$ equipped with a $G$-invariant non-degenerate inner product $\langle -,- \rangle \colon R^{\otimes 2} \to \RR$. 
We will work on the trivial principal $G$-bundle over a smooth oriented $n$-dimensional manifold $X$ equipped with a Riemannian metric.
It is easy to extend our results to nontrivial bundles by simply using the appropriate associated bundles everywhere.  For comparison, a textbook account of the usual physics approach to symmetry breaking in Yang--Mills--Higgs theory is presented in \cite[\S 21]{Weinberg2}.

Our fields consist of a $G$-gauge field $A \in \Omega^1(X,\gg)$, an $R$-valued scalar field $\phi \in C^\infty(X) \otimes R$, and a potential function $V \colon R \to \RR$ that is $G$-invariant.  
Our theory has action functional \footnote{More generally we could include a gauge coupling constant by defining the covariant derivative $\d_A$ to be $\d + gA$, where $g$ is a real constant.}
\[
S(A, \phi) = \int_{X} \left(-\frac 14|F_A|^2 + \frac 12 |\d_A(\phi)|^2 + \frac 12 V(\phi)\right) \dvol.
\]
The equations of motion say that 
\begin{align*}
\d_A \star F_A &= \langle \phi, \d_A \phi \rangle \\
\star \d_A \star \d_A \phi + \nabla V|_{\phi} &= 0.
\end{align*}
The reader should bear in mind that there is a group $\mc{G} = {\rm Maps}(X,G)$ of gauge automorphisms that acts on the space of solutions to the equations of motion.

In fact, we should take the {\em derived} space of solutions,
but we will not elaborate on that aspect here.
Instead, we will take it explicitly into account when we describe the formal derived neighborhood of a solution 
via the BV formalism.

We now describe the BV theory expanded around the solution $(A_0 = 0, \phi_0 = 0)$,
although our focus will ultimately be on perturbative theories arising from other solutions.
The graded vector space of fields is
\[
\begin{matrix}
\Omega^0(X; \gg)[1] \oplus \Omega^1(X;\gg)  \oplus \Omega^{n-1}(X; \gg)[-1]  \oplus\Omega^n(X; \gg) [-2] \\
\oplus \\
\Omega^0(X; R) \oplus \Omega^n(X; R)[-1].
\end{matrix}
\]
The first row was already described in Section~\ref{sec: BV review}, and it has the same shifted $L_\infty$ structure.
The second row -- by itself -- was already described in Section~\ref{sec: global},
but here it is a module over the first row by extending the $\gg$-action on the representation $R$ in a natural way 
by wedging the form factors; this is precisely the action of the gauge transformations and connections on the associated bundles.  There are further quadratic and cubic $L_\infty$ brackets between the first and second rows coming from the cubic and quartic terms in the action functional. 

The tangent complex $\TT_{\phi_0}$ is concentrated in cohomological degrees $-1$ to $2$, and is given by
\[
\xymatrix@R-=2pt{
\bb T_{\phi_0,-1} \ar[r]^{\d_{-1}} \ar@{=}[dd] &\bb T_{\phi_0,0} \ar[r]^{\d_{0}} \ar@{=}[dd] &\bb T_{\phi_0,1} \ar[r]^{\d_{1}} \ar@{=}[dd] &\bb T_{\phi_0,2} \ar@{=}[dd] \\ &&& \\
\Omega^0(X; \gg) \ar[r]^{\d} &\Omega^1(X;\gg) \ar[r]^{\d \star \d } &\Omega^{n-1}(X; \gg) \ar[r]^{\d} &\Omega^n(X; \gg) \\
&\oplus &\oplus & \\
&\Omega^0(X; R) \ar[r]^{\d \star \d- m^2 \star} &\Omega^n(X; R) &
},
\]
which arises from the purely quadratic terms in the action when expanded at this trivial solution.
Note that the kernel of the differential on the second row corresponds to the solutions of the PDE $\Delta \phi - m^2 \phi = 0$.

For the rest of this section, we will expand around solutions of the form $(A_0 = 0, \phi_0)$ where $\phi_0$ is a constant function and hence is determined by an element in $R$.
Looking at the equations of motion, we see that for $\phi_0$ to be a solution, it must be a critical point of the potential $V$ on $R$.
For the sake of concreteness, and because it is conventional in physics textbooks,
from now on we fix our potential to be the quartic functional
\[
V(r) = -\frac 14(|r|^2 - m^2)^2,
\]
where $m$ is a fixed real number, and $|r|^2$ is defined using the pairing on the representation~$R$. 
The critical points then lie on the sphere of radius~$m^2$.

If one expands the action functional around $(0,\phi_0)$ where $\phi_0$ is constant and $|\phi_0|^2 = m^2$,
then the mass terms appear in the quadratic part of the action functional.
In particular, consider the quadratic part that does not involve any derivatives;
it now takes the form
\[
\int_{X} \left(\frac 12 |A\wedge  \phi_0|^2 -\frac 12 m^2|\phi|^2  \right) \dvol,
\]
so that both $A$ and $\phi$ look massive.

We now turn to a more thorough examination of this example.

\subsection{Abelian Yang--Mills theory with a Higgs boson}

First we will consider the case where $G = \mr U(1)$, in which many terms vanish.  
Furthermore take $R$ to be a 1-dimensional complex representation of weight~$k$.  
Our first step is to describe the formal derived geometry of the derived stack $\Sol/\mc{G}$ at a solution $(0, \phi_0)$ where $\phi_0$ is a point in $R$ with radius~$m^2 = |\phi_0|^2$.

The tangent complex $\TT_{\phi_0}$ is
\[
\xymatrix{
\Omega^0(X) \ar[r]^{\d} \ar[dr]_{i\phi_0} &\Omega^1(X) \ar[r]^{D_1} \ar[dr]^(.35){D_2} &\Omega^{n-1}(X) \ar[r]^{\d} &\Omega^n(X) \\
&\Omega^0(X; R) \ar[ur]_(.35){D_3} \ar[r]_{D_4} &\Omega^n(X; R) \ar[ur]_{2k\langle i\phi_0, - \rangle} &
}
\]
where the operators $D_i$ arise from the quadratic part of the action functional when expanded around this solution.  Note that a factor of $i$ occurs in the infinitesimal gauge algebra action because we have identified forms valued in the Lie algebra of $\mr U(1)$ as real, rather than imaginary, differential forms.  We find
\begin{align*}
D_1 &= -\d \star \d + 2k^2m^2 \star \\
D_2 &=  - ki \phi_0 \,\d \star \\
D_3 &= -\star 2ki\langle \phi_0 \, , \d -\rangle\\
D_4 &= \d\star \d - \pi_{\phi_0}m^2\star,
\end{align*}
where $\pi_{\phi_0}$ is the orthogonal projection onto the real span of $\phi_0$ in $R$.  We immediately note the similarity to the Ginzburg--Landau example of Section \ref{GL_section}: the massive scalar field $\phi$ splits into a piece parallel to $\phi_0$, which is still massive (the Higgs boson) and a piece perpendicular to $\phi_0$ which is now massless (the Goldstone boson).

Look at how the cohomology of this complex compares to the cohomology at the point $\phi_0 = 0$ (which is an isolated singularity of the potential $V$).
For instance,  $\mr H^{-1}(\TT_0) = \RR$ but $\mr H^{-1}(\TT_{\phi_0}) = 0$ for $\phi_0 \ne 0$.  
Concretely, one sees that the stabilizer group changes.

In fact, a much stronger statement can be made:
there is another classical BV theory that is {\em equivalent} to this theory -- as perturbative theories --
but is simpler in nature.
The underlying na\"ive fields are a connection 1-form $B \in \Omega^1(X)$ and a real scalar $f \in C^\infty(X)$,
and the action functional is
\[
S'(B,f) =  \int_X -\frac 14 |\d B|^2 + M^2 |B|^2 - m^2f^2 + \frac 12 |\d f|^2 + \text{higher order}.
\]
The mass of the field $B$ here is given by $M = km$. 
Ignoring the terms involving $f$, this massive boson $B$ is governed by the {\em Proca action functional},
which is commonly introduced by physics textbooks in precisely this setting.
We emphasize that there is {\em no gauge symmetry} acting on $B$ here, 
so we are interested in just the derived variational locus of $S'$ and not some quotient thereof.
This $B$ field is coupled to a scalar field $f$ that takes values in $\RR$ rather than~$\RR^2$, by contrast with~$\phi$.
The coupling is encoded in higher order terms that we did not spell out.

\begin{remark}
In physics this theory is called a massive gauge theory but the name is something of a misnomer because there is no gauge symmetry.
The $B$ field, however, is a connection 1-form, leading to the name.
\end{remark}

The tangent complex $\TT'$ of this theory at the solution $(B_0=0, f_0=0)$ is
\[
\xymatrix@R-20pt{
\Omega^1(X) \ar[r]^{D_1} & \Omega^{n-1}(X) \\
\Omega^0(X) \ar[r]^{D_4} &\Omega^n(X)
},
\]
placed in degrees 0 and 1.
Note that there are no ghosts or antighosts.

The tangent complexes $\TT_{\phi_0}$ and $\TT'$ are quasi-isomorphic. 
Even better, one can retract $\TT_{\phi_0}$ onto~$\TT'$.

\begin{prop} 
\label{abelian_retract_prop}
The complex $\TT'$ is a deformation retract of $\TT_{\phi_0}$.
\end{prop}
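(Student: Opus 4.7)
The plan is to construct an explicit strong deformation retract of $\TT_{\phi_0}$ onto $\TT'$ by identifying two manifestly acyclic pairs inside $\TT_{\phi_0}$ and contracting them away. The first pair sits in degrees $-1$ and $0$ (``ghost eats Goldstone''): decomposing $R = R_\parallel \oplus R_\perp$ with $R_\parallel = \RR\langle\phi_0\rangle$ and $R_\perp = \RR\langle i\phi_0\rangle$, the differential $\phi_0 \colon \Omega^0(X;\gg) \to \Omega^0(X;R)$ factors through the Goldstone summand $\Omega^0(X;R_\perp) \cong \Omega^0(X)$ and, after choosing the basis $i\phi_0/m$ for $R_\perp$, is multiplication by the nonzero scalar $km$. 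The second pair sits in degrees $1$ and $2$ (``antighost eats Higgs antifield''): the map $k\langle\phi_0,-\rangle \colon \Omega^n(X;R) \to \Omega^n(X)$ factors through a one-dimensional summand of $\Omega^n(X;R)$ and is again multiplication by a nonzero constant proportional to $km$. Each pair is a two-term acyclic complex, so after their contraction the remaining fields account precisely for the fields of~$\TT'$.

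Having isolated these acyclic pairs, I would write down the retract data explicitly. The projection $p \colon \TT_{\phi_0} \to \TT'$ is the identity on $\Omega^1(X)$ and $\Omega^{n-1}(X)$, projects $\Omega^0(X;R)$ and $\Omega^n(X;R)$ onto the complementary (non-eaten) summands to produce the fields $f$ and its antifield, and annihilates the ghost and antighost. The inclusion $i \colon \TT' \to \TT_{\phi_0}$ is the corresponding splitting. The chain homotopy $h \colon \TT_{\phi_0} \to \TT_{\phi_0}[-1]$ inverts the pairing maps on the ``eaten'' summands (each with a factor of $1/(km)$) and vanishes elsewhere. The identity $p i = \mathrm{id}_{\TT'}$ and the side conditions $h^2 = 0$, $p h = 0$, $h i = 0$ then hold by inspection.

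The main obstacle is that the two acyclic pairs are not genuine subcomplexes of $\TT_{\phi_0}$: the ghost has an extra differential component $\d c \in \Omega^1(X)$, and the scalar sector has further cross-differentials $D_2$ and $D_3$ coupling it to the Yang--Mills sector, so that $\mathrm{id} - i p$ does not equal $d h + h d$ on the nose. To handle this I would invoke the homological perturbation lemma, viewing these off-diagonal pieces as a perturbation of the decoupled model in which the retract above is tautological. The lemma automatically transfers the perturbation to $\TT'$; one then checks that the induced differential on $\TT'$ is exactly $D_1 = \d\star\d + k^2 m^2\star$ on the $B$-sector (giving Proca mass $M = km$) together with the Klein--Gordon operator of mass $m$ on the $f$-sector. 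Because the perturbation is polynomial of small degree in the coupling, the perturbation series truncates after finitely many steps, so the verification is concrete and tractable.
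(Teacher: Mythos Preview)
Your strategy—isolate two contractible doublets and then use the homological perturbation lemma to absorb the off-diagonal differentials $\d$, $D_2$, $D_3$—is sound and is a legitimate alternative to the paper's argument.  The paper instead writes down the corrected retract data directly: for instance $p(A,\phi)=(A-\langle\phi_0^{-1},\d\phi\rangle,\,\mr{Im}(\phi_0^{-1}\phi))$ in degree~$0$, with the homotopy $h(\phi)=\mr{Re}(\phi_0^{-1}\phi)$, and checks $[d,h]=\iota p-\id$ by hand.  Your route explains \emph{why} correction terms like $-\langle\phi_0^{-1},\d\phi\rangle$ must appear (they are precisely the output of the perturbation series), at the cost of a little more machinery; the paper's route is quicker but more ad hoc.

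There is, however, a concrete inconsistency in your identification of the doublets.  You assert that the ghost differential lands in $R_\perp=\RR\langle i\phi_0\rangle$, while the map $k\langle\phi_0,-\rangle$ in degrees $(1,2)$ factors through $R_\parallel=\RR\langle\phi_0\rangle$.  These two claims cannot coexist: the pairs in degrees $(-1,0)$ and $(1,2)$ are exchanged by the BV pairing, so they must involve the \emph{same} real line in~$R$.  Since $\langle\phi_0,-\rangle$ vanishes on $R_\perp$ and is nonzero on $R_\parallel$, your reading of the degree-$(1,2)$ arrow forces the degree-$(-1,0)$ pair to live in $R_\parallel$ as well—and this is exactly what the paper does (its homotopy $h(\phi)=\mr{Re}(\phi_0^{-1}\phi)$ extracts the $\RR\phi_0$-component, and the surviving scalar $f$ is embedded via~$i\phi_0$).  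With mismatched summands the retract equations fail outright: for instance $D_2=-k\phi_0\,\d\star$ lands in $\Omega^n(X;R_\parallel)$, so if $R_\parallel$ is your surviving summand then $p$ does not kill it and the transferred differential acquires an unwanted cross-term between $B$ and~$f$.  Once the two summands are aligned your perturbation-lemma argument goes through cleanly (the series truncates because $h_0$ is supported only on the eaten line, giving $(h_0\delta)^2=0$; this is a placement argument rather than the ``polynomial degree in the coupling'' reason you cite).
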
 

The proof of this proposition is below, but let us note an important consequence of having such an explicit way to transfer information between these two descriptions: there is an equivalence of formal derived spaces, and hence perturbatively, between the two BV theories at hand, in the following sense.

\begin{definition}[{\cite[Definition 1.20]{ESW}}]
Two classical BV theories $\bb T$ and $\bb T'$ are \emph{perturbatively equivalent} if there is a map $F \colon \bb T[-1] \to \bb T'[-1]$ of $L_\infty$ algebras whose underlying map of cochain complexes is a quasi-isomorphism intertwining the shifted symplectic pairings.
\end{definition}

Note that by a ``map of $L_\infty$ algebras,'' we mean here the homotopy-coherent notion (i.e., corresponding to a strict map of their dg cocommutative coalgebras).

\begin{theorem}
The classical BV theory encoded by $S'$ at the point $(0,0)$ is perturbatively equivalent to the classical BV theory encoded by $S$ at the point~$(0,\phi_0)$.
\end{theorem}

This result follows from Proposition \ref{abelian_retract_prop} because we can use standard transfer formulae \cite[Chapter 10]{LodayVallette} for shifted $L_\infty$ algebras to equip the complex $\TT'$ with a quasi-isomorphic shifted $L_\infty$ structure.
In physical terms, this means we can equip the massive gauge and scalar fields with an action functional 
whose dynamics are equivalent to those of the original theory, 
when expanded around the nonzero field~$\phi_0$.
(Note that the shifted symplectic structures are intertwined by the maps as well:
see Lemma~\ref{intertwine_brackets_lemma}.)

We now turn to proving the proposition.
In physical terminology, we ``integrate out'' the doublet $(c, \chi)$, 
consisting of the ghost $c$ and the Goldstone boson $\chi$ (the component of $\phi$ with the same phase as $i\phi_0$).
In mathematical terms, these form a contractible subcomplex 
as the differential in the classical BV complex included an isomorphism $c \to \chi$ given by scaling by~$ki\phi_0$.
With a bit of thought, this process can be described quite generally in the BV formalism
(cf. Proposition 1.23 of~\cite{ESW}).

\begin{proof}[Proof of Proposition \ref{abelian_retract_prop}]
We will define cochain maps $i,p$ and $h$ with the following sources and targets:
\[
\xymatrix{
\bb T' \ar@<3pt>[r]^\iota & \bb T_{\phi_0} \ar@<3pt>[l]^p \ar@(dr,ur)[]_h
}.
\]
Let $\alpha = \phi_0/|\phi_0|$ denote the phase of $\phi_0$.  The map $p$ is defined by
\begin{align*}
p( A,  \phi) &= ( A - \langle\phi_0^{-1}, \d \phi\rangle, \mr{Im}(\alpha^{-1} \phi)) \text{ in degree } 0 \\
p( A^*,  \phi^*) &= ( A^*, \mr{Im}(\phi_0^{-1} \phi^*) - (\d A^*)\phi_0) \text{ in degree } 1
\end{align*}
and the map $\iota$ is defined by
\begin{align*}
\iota( A,  \phi) &= ( A + \langle \phi_0^{-1},\d  i\alpha \phi\rangle, i\alpha \phi) \text{ in degree } 0 \\
\iota( A^*,  \phi^*) &= ( A^*, i\phi_0 \phi^* + i\phi_0^2\d A^*) \text{ in degree } 1.
\end{align*}
The composite $p \circ \iota$ is equal to the identity on $\TT'$.  The difference $\iota \circ p - \id_{\TT_{\phi_0}}$ is given by the map sending $( A,  \phi)$ to $(-\mr{Re}(\phi_0^{-1}\d( \phi)), -\phi_0\mr{Re}(\phi_0^{-1} \phi))$, and sending $( A^*,  \phi^*)$ to $(0, - \mr{Re}(\phi_0^{-1}-\phi_0  \d( A^*)) - \mr{Re}(\phi_0^{-1}  \phi^*))$. 

The cochain homotopy $h$ has degree $-1$. 
From degree 0 to degree -1, it is given by 
\[
h( \phi) = \mr{Re}(\phi_0^{-1}  \phi)
\] 
on $\Omega^0(X)$, 
and from degree 2 to degree 1, it is given by
\[
h(c^*) = \mr{Re}(\phi_0^{-1}c^*)
\] 
in $\Omega^n(X;R)$.  
This map satisfies the condition $[\d_{\TT_{\phi_0}}, h] = \iota \circ p - \id_{\TT_{\phi_0}}$, as required.
\end{proof}

These maps play nicely with the natural $(-1)$-shifted symplectic structures on $\TT_{\phi_0}$ and on $\TT'$ given by the wedge pairing of differential forms and the invariant pairing on~$R$.  
Direct computation verifies the following claim.

\begin{lemma} \label{intertwine_brackets_lemma}
The maps $\iota$ and $p$ intertwine these pairings up to rescaling the scalar term by $-\phi_0^{-2}$,
namely, 
\[
\langle \iota( A,  \phi), i( A^*,  \phi^*) \rangle_{\TT_{\phi_0}} = \langle  A,  A^* \rangle_{\TT'} - \phi_0^{-2}\langle  \phi,  \phi^* \rangle_{\TT'},
\] 
and a similar expression for~$p$.
\end{lemma}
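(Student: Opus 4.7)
The plan is a direct calculation: substitute the explicit formulas for $\iota$ and $p$ from the proof of Proposition \ref{abelian_retract_prop} into the $(-1)$-shifted pairings and track the resulting terms. Both pairings split as a gauge-sector piece $\int_X A \wedge A^*$ plus a scalar-sector piece built from the invariant pairing on $R$ (respectively from the standard pairing for the real scalar $f$ in $\TT'$), so the verification splits into matching diagonal terms and showing that cross terms vanish.

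For the $\iota$ identity, plugging in gives a gauge-sector contribution $\int_X \bigl(A + \langle \phi_0^{-1}, \d\phi\rangle\bigr) \wedge A^*$ and a scalar-sector contribution $\int_X \langle i\phi_0\, \phi,\; i\phi_0\, \phi^* + i\,\d A^*\rangle_R$. The diagonal pieces $\int_X A\wedge A^*$ and $\int_X \langle i\phi_0\,\phi,\; i\phi_0\,\phi^*\rangle_R$ reproduce the right-hand side, where the factor $i^2=-1$ together with the comparison between the $R$-valued pairing on $\TT_{\phi_0}$ and the scalar pairing on $\TT'$ accounts for the coefficient $-\phi_0^{-2}$. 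The remaining cross terms $\int_X \langle \phi_0^{-1}, \d\phi\rangle \wedge A^*$ and $\int_X \langle i\phi_0\,\phi,\; i\,\d A^*\rangle_R$ should cancel by Stokes' theorem applied to $\d(\phi\, A^*)$, once the common scalar factors are extracted.

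The computation for $p$ runs in parallel: one substitutes the imaginary-part projections in place of the multiplication by $i\phi_0$, separates the result into diagonal and cross contributions, and applies the same integration-by-parts identity to kill the cross terms. The main obstacle is careful sign and normalization bookkeeping, namely tracking the $i$-factors in $\iota$ (which produce the negative sign in front of the scalar pairing), the rescaling relating the $R$-valued pairing to the $\TT'$ scalar pairing (which produces the factor $\phi_0^{-2}$), and the orientation conventions in Stokes' theorem (which determine that the two cross terms actually cancel rather than add). Once these conventions are fixed, no further structural input is required; the lemma records the output of the resulting bookkeeping.
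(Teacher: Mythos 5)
Your proposal is correct and matches the paper's (unwritten) argument: the paper simply asserts that direct computation verifies the lemma, and the computation you outline — substituting the formulas for $\iota$ and $p$, matching the diagonal gauge and scalar terms (with the factors of $i$ and $\phi_0$ producing the sign and rescaling), and killing the cross terms $\int \langle \phi_0^{-1}, \d\phi\rangle \wedge A^*$ and $\int \langle i\phi_0\phi, i\,\d A^*\rangle$ by integrating by parts via $\d(\phi A^*)$ — is exactly that direct verification. No further comment is needed.
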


\subsection{The General Case} \label{general_gauge_breaking_section}

Now let's do the general case, allowing for nonabelian gauge group $G$ and arbitrary finite-dimensional representation $R$. 
As before, fix a constant function $\phi_0$ from $X$ to $R$ where $\phi_0$ is a critical point of the potential~$V$. 
Let's describe the geometry at play here.
Inside $R$, we have the critical locus ${\rm Crit}(V)$, which is preserved by the $G$-action by hypothesis.
Thus we have a threefold inclusion
\[
R \supset {\rm Crit}(V) \supset G \cdot \phi_0 
\]
where $G \cdot \phi_0$ denotes the $G$-orbit of $\phi_0$.
For simplicity, let us assume that $\phi_0$ is a smooth point of ${\rm Crit}(V)$ and also a smooth point of the (possibly immersed) submanifold $G \cdot \phi_0$.
Then the linearized version of the inclusions above is
\[
T_{\phi_0} R \supset T_{\phi_0}{\rm Crit}(V)  \supset T_{\phi_0}(G \cdot \phi_0)
\]
Note that $T_{\phi_0} R \cong R$ as vector spaces but not as $\gg$-representations.
In physical terms, the subspace $T_{\phi_0}{\rm Crit}(V)$ encodes the massless directions along which the field $\phi$ can vary (to first order), so these are the Goldstone modes.
The subspace $T_{\phi_0}(G \cdot \phi_0)$ encodes the massless directions that are trivialized by (first-order) gauge transformations, so these are the Goldstone modes eaten by ghosts.
We now use these observations to construct another BV theory.

Let $H \subset G$ denote the stabilizer of $\phi_0$, and let $\hh \sub \gg$ be the infinitesimal stabilizer of $\phi_0$.
There is a Lie algebra action of $\hh$ on $T_{\phi_0} R$,
and it has an invariant subspace $T_{\phi_0}(G \cdot \phi_0) \cong \gg/\hh$.
The $G$-invariant pairing on $R$ induces a non-degenerate, $\hh$-invariant inner product on $R \cong T_{\phi_0} R$ (which is simply a vector space isomorphism), 
and we decompose $R$ into $R_{\phi_0} \oplus R_\perp$ as $\hh$-representations.
Here $R_{\phi_0}$ denotes $T_{\phi_0}(G \cdot \phi_0)$, the tangent space to the $G$-orbit of $\phi_0$ in $R$,
and $R_\perp$ denotes its orthogonal complement. 
As a vector space $R_\perp \cong T_{\phi_0} R/T_{\phi_0}(G \cdot \phi_0)$, so it encodes the directions that are not trivialized by gauge transformations.
Note that $\gg$ also decomposes as a direct sum $\hh \oplus \hh_\perp$ of orthogonal subspaces using a non-degenerate pairing invariant on~$\gg$.  

The underlying cochain complex $\TT_{\phi_0}$ of the perturbative theory near $\phi_0$ is
\[\xymatrix{
\Omega^0(X; \gg) \ar[r]^{\d} \ar[dr]_{\phi_0} &\Omega^1(X;\gg) \ar[r]^{D_1} \ar[dr]^(.35){D_2} &\Omega^{n-1}(X; \gg) \ar[r]^{\d} &\Omega^n(X; \gg) \\
&\Omega^0(X; R) \ar[ur]_(.35){D_3} \ar[r]_{D_4} &\Omega^n(X; R), \ar[ur]_{\phi_0} &
}\]
where in defining the arrows from the second row to the first we used the invariant pairings on $R$ and $\gg$ to derive an equivariant map $\langle -,-\rangle_\gg \colon R \otimes R \to \gg$ from the action map $R \otimes \gg \to R$.  The operators $D_i$ are given by formulae analogous to those we saw in the abelian case.  

Let us assume, for simplicity, that the $G$-orbit of $\phi_0$ coincides with $\mr{Crit}(V)$, so that $T_{\phi_0}(G \cdot \phi_0) = T_{\phi_0} \mr{Crit}(V)$.  Then we have:
\begin{align*}
D_1 &= -\d \star \d + 2\langle\phi_0,\phi_0\rangle \star \\
D_2 &= - \alpha_{\phi_0} \d \star \\
D_3 &= - \star 2\langle \phi_0, \d -\rangle_\gg \\
D_4 &= \d\star \d + \pi_{R_\perp}m^2 \star,
\end{align*}
where $\pi_{R_\perp}$ is the projection onto the $\hh$-subrepresentation $R_\perp$ of $R$, and $\alpha_{\phi_0}$ is the map from $\gg \to R$ sending $x$ to $x \cdot \phi_0$.  There are also quadratic and cubic Lie brackets given by the action of $\Omega^0(X; \gg)$ on everything,
along with similar brackets to those mentioned in the abelian case.  
There are also quadratic and cubic brackets coming from the Lie bracket on $\gg$ that did not appear above.

\begin{remark}
 In order to recover the example where $\gg = \mf u(1)$ and $R = \CC$ from the previous section, note that in that derivation we used a specific convention under which $\mf u(1)$ was identified with $\RR$ and the infinitesimal action of $A \in \mf u(1)$ on $\phi \in R$ is via $A \cdot \phi = iA\phi$.  In particular, the operators $D_2, D_3$ incuded a factor of $i$ and $\pi_{R_\perp}$ was given by projection onto the real subspace orthogonal to $i\phi_0$.
\end{remark}

We define a classical field theory that will model Yang--Mills--Higgs theory after symmetry breaking, with the following linearization:
\[
\TT' = 
\xymatrix@R-20pt{
\Omega^0(X; \hh) \ar[r]^{\d} &\Omega^1(X;\hh) \ar[r]^{\d \star \d}  &\Omega^{n-1}(X; \hh) \ar[r]^{\d} &\Omega^n(X; \hh) \\
&\Omega^1(X;\hh_\perp) \ar[r]^{D_1}  &\Omega^{n-1}(X; \hh_\perp) & \\
&\Omega^0(X; R_\perp) \ar[r]^{D_4} &\Omega^{n}(X; R_\perp). &
}
\]
Just as in the abelian case, we have the following.

\begin{prop} \label{def_retract_prop}
The complex $\TT'$ is a deformation retract of~$\TT_{\phi_0}$.
\end{prop}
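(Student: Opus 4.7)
The plan is to mimic the proof of Proposition \ref{abelian_retract_prop} block by block, using the orthogonal decompositions $\gg = \hh \oplus \hh_\perp$ and $R = \gg\phi_0 \oplus R_\perp$. The key structural fact is that the action map $\rho \colon \hh_\perp \to \gg\phi_0$, $\xi \mapsto \xi \cdot \phi_0$, is a linear isomorphism: its kernel inside $\gg$ is precisely $\hh = \mr{Stab}(\phi_0)$, and its image is $\gg\phi_0$ by definition. This $\rho$ plays the role that multiplication by $i\phi_0$ plays in the abelian case, and its transpose $\rho^{*}\colon \gg\phi_0 \to \hh_\perp$ with respect to the invariant pairings is likewise an isomorphism.

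First I would identify the acyclic summand and the surviving subcomplex. Under the two splittings, $\TT_{\phi_0}$ decomposes as a graded vector space into an $\hh$-block isomorphic to the pure Yang--Mills tangent complex for $\hh$; an $\hh_\perp \oplus \gg\phi_0$ block whose differential couples the ghosts in $\Omega^0(X;\hh_\perp)$ to the Goldstone modes in $\Omega^0(X;\gg\phi_0)$ via $\rho$, and dually the anti-Goldstones to the antighosts; and an $R_\perp$ block coming from the Higgs scalars. Inside this decomposition, $\TT'$ is the subcomplex obtained by discarding the ghost/Goldstone and antighost/anti-Goldstone pairs from the middle block, and the claim is that it is a strong deformation retract.

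Second I would define $\iota$, $p$, and $h$ by direct generalization of the abelian formulae, using $\rho^{-1}$ wherever the abelian proof used $\phi_0^{-1}$. For example, in degree zero one sets
\[
p(A,\phi) = \bigl(A - \d(\rho^{-1}\phi_\parallel),\ \phi_\perp\bigr),
\]
where $\phi = \phi_\parallel + \phi_\perp$ is the $\gg\phi_0 \oplus R_\perp$ decomposition, with analogous formulae in the antifield sector involving $D_2$ and $D_3$; the homotopy $h$ is $\rho^{-1}$ applied to the $\gg\phi_0$-component of $\phi$ in degree $0 \to -1$, and dually $(\rho^{*})^{-1}$ on the $\hh_\perp$-component of the degree-$2$ antighost. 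On the $\hh$-block these maps are the obvious inclusion, projection, and zero homotopy, since $\hh$ annihilates $\phi_0$ and hence the off-diagonal operators $D_2, D_3$ carry no $\hh$-component.

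The main obstacle is verifying the homotopy identity $[\d_{\TT_{\phi_0}}, h] = \iota \circ p - \id$ once all four operators $D_1,D_2,D_3,D_4$ are tracked under the splittings. The essential inputs are: the restriction of the mass term in $D_1$ to $\hh_\perp$ equals $\rho^{*}\rho\,\star$, reproducing the expected Proca-type mass, while on $\hh$ it vanishes so the $\hh$-block is genuinely the massless Yang--Mills complex; the cross-term $D_2 = -\phi_0 \d\star$ and its transpose $D_3$ vanish on $\hh$ and on $R_\perp$, so they contribute only between the acyclic ghost/Goldstone block and the surviving $1$-form piece, providing exactly the corrections needed to cancel the $\rho^{-1}\d$ terms in $\iota$ and $p$; and these cancellations all reduce to repeated use of infinitesimal invariance $\langle\xi\phi_0,\psi\rangle + \langle\phi_0,\xi\psi\rangle = 0$ of the $G$-invariant pairings. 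No new analytic input beyond the abelian case is required, and the argument reduces to careful bookkeeping over the $(\hh,\hh_\perp,\gg\phi_0,R_\perp)$-decomposition.
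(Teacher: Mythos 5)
Your proposal is correct and follows essentially the same route as the paper: exhibit explicit maps $\iota$, $p$, and $h$ generalizing the abelian formulas, with the ghost/Goldstone and antighost/anti-Goldstone pairs in the $\hh_\perp \oplus \gg\phi_0$ block forming the contractible piece that is retracted away, and the homotopy identity reducing to invariance of the pairings. The only deviation is notational: you transport along the genuine inverse of $\rho\colon \hh_\perp \to \gg\phi_0$ where the paper uses the pairing-dual map $\alpha_{\phi_0}\colon R \to \gg$, a normalization-level difference that does not change the argument.
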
 

\begin{proof}
We will define cochain maps $i,p$ and $h$ with the following sources and targets.
\[
\xymatrix{
\TT' \ar@<3pt>[r]^i & \TT_{\phi_0} \ar@<3pt>[l]^p \ar@(dr,ur)[]_h
}
\]

We can define the maps $i$ and $p$ in the following way.  On the first row of the complex $\TT'$ the maps are defined by the inclusion $\hh \to \gg$ and the projection $\gg \to \hh$.  On the second and third rows they are defined in a similar way to Proposition \ref{abelian_retract_prop}.  So, let us write $( A,  \phi)$ for an element of $\Omega^1(X; \hh_\perp) \oplus \Omega^0(X; R_\perp)$, and $( A^*,  \phi^*)$ for the corresponding antifields in $\Omega^{n-1}(X; \hh_\perp) \oplus \Omega^n(X; R_\perp)$.  

Let $\pi_{R_\perp}$ denote the projection $R \to R_\perp$, let $\iota_{R_\perp}$ denote the inclusion $R_\perp \to R$, and let $\pi_{\hh_\perp}$ and $\iota_{\hh_\perp}$ denote similarly the inclusion of and projection onto $\hh_\perp \sub \gg$.  Acting on $\phi_0$ defines a map $\gg \to R$, and we can use the pairings to dualize this to a map $\alpha^\vee_{\phi_0} \colon R \to \gg$.  We then define
\begin{align*}
p( A,  \phi) &= ( A - \pi_{\hh_{\perp}}(\alpha^\vee_{\phi_0}(\d( \phi))), \pi_{R_\perp}( \phi)) \\
p( A^*,  \phi^*) &= ( A^*, \pi_{R_\perp}( \phi^*) - \d( A^*)\phi_0))\\
\iota( A,  \phi) &= ( A + \iota_{\hh_\perp}(\alpha^\vee_{\phi_0}(\d( \phi))), \iota_{R_\perp}( \phi))  \\
\iota( A^*,  \phi^*) &= ( A^*, \iota_{R_\perp}( \phi^*) + \d ( A^*)\phi_0).
\end{align*}
We can likewise define the homotopy $h \colon \TT_{\phi_0} \to \TT_{\phi_0}[-1]$ by $h( \phi) = \alpha^\vee_{\phi_0}(\pi_{\RR \phi_0} \phi)$ in $\Omega^0(X;\gg)$ (from degree 1 to degree 0), and $h(c^*) = c^* \phi_0$ in $\Omega^n(X;R)$, the map given by acting on $\phi_0$ (from degree 3 to degree 2).  One can then verify that this data satisfies the necessary conditions to define a deformation retract.
\end{proof}

As in the abelian case, we can check that the symplectic structures on $\TT_{\phi_0}$ and $\TT'$ are compatible up to ``rescaling by $\phi_0$''.  
Here this means that, for instance
\[
\langle \iota( A,  \phi), \iota( A^*,  \phi^*) \rangle_{\TT_{\phi_0}} = \langle  A,  A^* \rangle_{\TT'} - \langle \alpha^\vee_{\phi_0}( \phi), \alpha^\vee_{\phi_0}( \phi^*) \rangle_{\TT'},
\] 
and
\[
\langle p( A,  \phi), p( A^*,  \phi^*) \rangle_{\TT'} = \langle ( A)\phi_0, ( A^*)\phi_0 \rangle_{\TT_{\phi_0}} + \langle  \phi,  \phi^* \rangle_{\TT_{\phi_0}}.
\] 
That the map is a symplectomorphism is key to the main theorem.  Just as in the abelian case, we can use homotopy transfer along the deformation retract to extend the free perturbative field theory encoded by $\TT'$ to an interacting theory with action functional $S'$.  Altogether, we have shown the following.

\begin{theorem}
The classical BV theory encoded by $S'$ at the point $(0,0)$ is perturbatively equivalent to the classical BV theory encoded by $S$ at the point~$(0,\phi_0)$.
\end{theorem}

\begin{remark}
Mathematical readers might enjoy knowing that this theorem applies to a situation of genuine physical interest: the theory of the electroweak force.
This theory starts as a Yang--Mills--Higgs theory where the gauge group is $G = \mr U(1) \times \SU(2)$ and where the gauge field $\phi$ is valued in the complex two dimensional representation $R$ on which the $\SU(2)$ factor acts by its fundamental representation and the $\mr U(1)$ factor acts with weight 1.  When we work around a solution where $\phi$ takes a non-zero value $\phi_0$, the resulting theory is perturbatively equivalent to a theory with classical BV complex $\bb T'$ as in Proposition \ref{def_retract_prop}, where
\begin{itemize}
 \item The unbroken Lie algebra $\hh = \mr{Stab}(\phi_0)$ is a copy of $\mf u(1)$, embedded in $\gg$ according to the \emph{Gell-Mann--Nishijima formula}: that is, the embedding $\theta \mapsto \left(\theta, \frac \theta 2 h \right) \in \mr u(1) \oplus \su(2)$.  The corresponding massless gauge field is the photon.
 \item The remaining factor of the gauge Lie algebra $\hh_\perp$ is three dimensional.  The three corresponding massive gauge bosons are known as the $W^\pm$ and $Z^0$ bosons.
 \item The remaining factor $R_\perp$ of $R$ is real one dimensional.  The corresponding scalar field is known as the Higgs boson.
\end{itemize}
We have not included fermions in this discussion, but doing so is straightforward.
\end{remark}

\begin{remark}
We can construct a family of theories that interpolates between the two homotopy equivalent descriptions of the gauge theory with broken symmetry.  Consider the one parameter family $\TT_t$ of classical field theories on $X$ with the following linearization:
\[
\xymatrix{
\Omega^0(X; \hh) \ar[r]^{\d} &\Omega^1(X;\hh) \ar[r]^{\d \star \d}  &\Omega^{n-1}(X; \hh) \ar[r]^{\d} &\Omega^n(X; \hh) \\
\Omega^0(X; \hh_\perp) \ar[dr]_{\phi_0} \ar[r]^{t\d} &\Omega^1(X;\hh_\perp) \ar[r]^{\d \star \d + t|\phi_0|^2\star} \ar[dr]^(.35){t^2D_2}  &\Omega^{n-1}(X; \hh_\perp) \ar[r]^{t\d} & \Omega^n(X, \hh_\perp) \\
&\Omega^0(X; R_0) \ar[ur]_(.35){t^2D_3} \ar[r]_{t^3D_4} &\Omega^{n}(X; R_0) \ar[ur]_{\phi_0} \\
&\Omega^0(X; R_\perp) \ar[r]^{D_4} &\Omega^{n}(X; R_\perp).
}
\]
If we set $t=1$ then we obtain the complex $\TT_{\phi_0}$. If instead we set $t = 0$ we obtain the sum of the complex $\TT'$ with a contractible cochain complex. 
\end{remark}

\begin{example}
More elaborate examples of gauge symmetry breaking appear throughout physics.
As an example consider ``hidden local symmetry,'' 
which was introduced as an attempt to better understand effective chiral theories arising from QCD.
(See the exposition in \cite{MaHarada}.)  
Recall that in the chiral theory, the flavor group $G_F$ acts both from left and right so we can extend the global symmetry to $G_F \times G_F$.
This symmetry breaks to a residual diagonal symmetry $G_F$,
so that the broken theory becomes a sigma model with target $G_F \cong G_F \times G_F/G_F^{diag}$.
In practice people consider the perturbative expansion around a constant field $\phi = g \in G$ in the target.
The goal of hidden local symmetry is to show this sigma model is perturbatively equivalent to some theory with gauged flavor symmetry, which we now describe.

We now set $G = G_F$ for brevity's sake.  Consider the perturbative sigma model
with a $\gg \oplus \gg$-valued scalar field $(x,y)$ and action functional
\[
S(x,y) = \int \frac 12|\d x - \d y|^2 + V(x-y) \dvol,
\]
where $V$ is a potential.  
Here we will take $V$ to be the usual quartic (which might appear as a truncation of a more sophisticated model).
Note that $\gg$ acts diagonally on the target and hence provides a global symmetry of this theory.
 
Let us now enhance this theory by gauging this global symmetry.
Add a $\gg$-valued gauge field $A$.
We then obtain a variant of Yang--Mills--Higgs theory with representation $\gg \oplus \gg$: 
the theory with action functional
\[
S(A,x,y) = \int -\frac 14|F_A|^2 + \frac 12|\d_A x - \d_A y|^2 + V(x-y)  \dvol.  
\]
When we work perturbatively around a generic critical point of the action functional, 
this gauge symmetry breaks.
The gauged theory is perturbatively equivalent to a theory with massless $\gg$-valued scalar fields (playing the role of the pions)
and a massive $\gg$-valued 1-form field (playing the role of the vector mesons).
\end{example}

\begin{remark}[Twisting supergravity]
In parallel to the story of twisting supersymmetric field theories told in Section~\ref{susy_section}, 
there is a setting where the supersymmetry acts locally, i.e., where the action varies in spacetime.  
A notable class of theories with local supersymmetry are supergravity theories, 
i.e., supersymmetric extensions of theories of classical gravity.  
Costello and Li \cite{CostelloLi} recently introduced a theory of twisting for supergravity theories; 
as we will explain, their prescription is yet another example of gauge symmetry breaking in the language we have discussed. 

Let a shifted $L_\infty$ algebra $\bb T$ describe such a supergravity theory on $\RR^n$ in the classical BV formalism. Among the fields is a bosonic field $q \in C^\infty(\RR^n; \Pi \Sigma)[1]$ referred to as a ``bosonic ghost,'' 
where $\Pi \Sigma$ is a space of supertranslations, placed in odd $\ZZ/2\ZZ$-degree.  
Because $q$ has overall even degree, it makes sense to consider symmetry breaking, 
where we consider perturbation theory around a classical solution where $q$ takes a non-zero value.
For instance, we might have constant $q$ such that $q^2 = 0$ in the supertranslation group.  
Such perturbative theories define \emph{twists} of the supergravity theory.
\end{remark}

\section{Towards quantization of Yang--Mills--Higgs theories}

It is beyond the scope of this paper to construct the perturbative quantization of the BV theory for the spontaneously broken gauge symmetry.  For those familiar with the textbook story, however,
we rapidly sketch how the 't Hooft ({\it aka} $R_\xi$) gauges appear in our articulation of the theories.
With those in hand, it is possible to deploy the usual power-counting arguments to verify renormalizability {\it \`a la} 't Hooft-Veltman.

\subsection{Recollections on gauge-fixing}

We will describe the concept of gauge-fixing in concrete terms before turning to its realization in the BV formalism.  It will be enough to restrict attention to free theories in this section.

Let $F_0$ denote the space of ordinary (as opposed to BRST) fields, and let $F_{-1}$ denote the space of ghosts.  The subscript will end up representing cohomological degree in the BV formalism.
(In our examples above, $F_0$ is typically the space of $\gg$-valued 1-forms, encoding connection 1-forms,
and $F_{-1}$ is $\gg$-valued functions.)
We have a linear map $d \colon F_{-1} \to F_0$ that encodes the infinitesimal gauge symmetry.
Geometrically, we view the quotient space $F_0/\im(d)$ as the ``true'' space of interest, 
since it identifies fields that are gauge-equivalent.
We also have a linear map $D \colon F_0 \to F_1$, 
where $Df = 0$ is the linear equation of motion for this theory.
(Here $F_1$ is just some auxiliary vector space to phrase the equations of motion.)
We require that $D(dc) = 0$ for all $c \in F_{-1}$,
so that $\im(d) \subset \ker(D)$.
For the classical theory, the ``true'' solutions to the EoM are $\ker(D)/\im(d)$,
identifying naive solutions that are gauge-equivalent.
For the quantum theory, we want to integrate over the quotient space $F_0/\im(d)$, 
which is unfortunate as often we only know how to integrate over $F_0$ itself. 

In either the classical or quantum setting, we have {\em quotient} spaces but it would be nicer to work with subspaces.
Hence, it is convenient to pick a ``slice'' inside $F_0$, 
which is a linear subspace $V \subset F_0$ such that 
\begin{itemize}
\item $V$ and $\im(d)$ span $F_0$ and 
\item the intersection $\ker(D) \cap V$ is isomorphic to the true solutions $\ker(D)/\im(d)$ of the equation of motion.
\end{itemize}
We call such a subspace $V$ a choice of {\em gauge-fix}.

\begin{remark}
It is usually possible to pick out this linear subspace as the critical locus of a {\em quadratic} function.
This function is the {\em gauge-fixing Lagrangian} $L_{\mr{GF}}$,
and physicists add it to the na\"ive action $L$ on $F_0$.
Taking the critical locus of the sum $L + L_{\mr{GF}}$, 
one finds the intersection~$\ker(D) \cap V$
(This is not guaranteed in general but holds in standard examples.)
\end{remark}

In the BV formalism we extend the graded vector space to include antifields and antighosts.
That is, we have a cochain complex
\[
F_{-1} \xto{d} F_0 \xto{D} F_0^\vee \xto{d^\vee} F_{-1}^\vee
\]
where $F_0^\vee$ plays the role of $F_1$ from above.
``Formal solutions'' to the equation of motion are identified with the cohomology of this complex.
We will now use $D$ to denote the differential on the whole complex, rather than giving a distinct name for each map.

Again, we want to find a slice of $F_0$ that is complementary to $\im(d)$
in the sense that together they span~$F_0$.
Instead of specifying a subspace $V \subset F_0$,
we ask for a map $D^* \colon F_0 \to F_{-1}$
such that $\ker(D^*)$ is a gauge-fix in the sense already articulated.

But we want to pursue this approach compatibly with the full structure of the cochain complex.
Hence it is natural to ask for a degree -1 operator $D^*$ on the entire graded vector space of fields, ghosts, and their antitheses such that
\begin{itemize}
\item $(D^*)^2 = 0$, 
\item in each degree $k$, the intersection $\ker(D) \cap \ker(D^*)$ is isomorphic to the cohomology .
\end{itemize}
This kind of situation shows up in the Hodge decomposition of the de Rham complex on a Riemannian manifold,
and so a lot of intuition can be borrowed from those results: one can require that the commutator $[D, D^*]$ to be a generalized Laplacian and then employ Hodge theory \cite[Chapter 7.4]{Costello}.

\subsection{The 't Hooft gauges}

There is a well-known family of gauge-fixes in the setting of gauge symmetry breaking for Yang--Mills--Higgs theory:
for a real number $0 < \xi < \infty$, consider the gauge-fixing Lagrangian
\begin{align*}
L_{\mr{GF}}(A, \phi) 
&= \frac{1}{2\xi}\int (\partial_\mu A^\mu + \xi \langle \phi_0, \phi \rangle)^2\\
&= \frac{1}{2\xi}\int (\star \d \star A + \xi \langle \phi_0, \phi \rangle)^2
\end{align*}
whose critical locus picks out the subspace $\{ \star \d \star A + \xi \langle \phi_0, \phi \rangle = 0 \}$ inside the degree 0 fields.
An appealing feature of this family of gauges is that the propagators behave well and lead to power-counting proofs of renormalizability.

It is convenient to rephrase this family of subspaces so that we can take the $\xi \to \infty$ limit cleanly.
Set
\[
V_{(s:t)} = \{ s \star \d \star A + t \langle \phi_0, \phi \rangle = 0 \}
\]
where $(s:t)$ denotes a point on the projective line~$\RR\PP^1$. 
The point $[1:\xi]$ corresponds to the gauge-fixing introduced above.
Note that $V_{[0:1]}$ corresponds to the condition $\langle \phi_0, \phi \rangle = 0$.
In other words, one works only with the scalar fields that live in $R_\perp$,
which are the physically relevant fields as we have seen.
This gauge-fixing is known as the {\em unitarity gauge}.

We now give a BV extension of this gauge.

\begin{lemma}
There is a gauge-fixing operator $D^*_\xi$ for the spontaneously broken BV theory, for $0 < \xi < \infty$, extending the usual 't Hooft gauges.  We define $D^*_\xi$ to be
\[
\xymatrix{
\Omega^0(X; \hh) &\Omega^1(X;\hh) \ar[l]_{\d^*}  &\Omega^{n-1}(X; \hh_\perp) \ar[l]_{\d^*\star\d^*} & \Omega^n(X, \hh) \ar[l]_{\d^*} \\
\Omega^0(X; \hh_\perp)  &\Omega^1(X;\hh_\perp) \ar[l]_{\xi^{-1/2}\d^*}  &\Omega^{n-1}(X; \hh_\perp) \ar[l]_{\xi m^2 \star} \ar[dl]^(.6){\d^*_{2}} &\Omega^n(X; \hh_\perp) \ar[l]_{\xi^{-1/2}\d^*} \ar[dl]^{\xi^{1/2}\phi_0} \\
&\Omega^0(X; R_0) \ar[ul]^{\xi^{1/2} \phi_0} &\Omega^{n}(X; R_0) \ar[l]^{\xi^{-1}\d^*\star\d^*} \ar[ul]_(.6){\d^*_{1}} \\
&\Omega^0(X; R_\perp)  &\Omega^{n}(X; R_\perp). \ar[l]_{\star}
}
\]
where $\d^*_{1} = \phi_0\star \d^*$ and $\d^*_{2} = \phi_0\d^*\star$, independent of $\xi$.  
This definition can be extended to a family over $\bb{RP}^1$ as the operator $D^*_{(s:t)}$ defined as 
\[
\xymatrix{
\Omega^0(X; \hh) &\Omega^1(X;\hh) \ar[l]_{\d^*}  &\Omega^{n-1}(X; \hh_\perp) \ar[l]_{\d^*\star\d^*} & \Omega^n(X, \hh) \ar[l]_{\d^*} \\
\Omega^0(X; \hh_\perp)  &\Omega^1(X;\hh_\perp) \ar[l]_{s\d^*}  &\Omega^{n-1}(X; \hh_\perp) \ar[l]_{t^2m^2 \star} \ar[dl]^(.64){st\d^*_2} &\Omega^n(X; \hh_\perp) \ar[l]_{s\d^*} \ar[dl]^{t\phi_0} \\
&\Omega^0(X; R_0) \ar[ul]^{t \phi_0} &\Omega^{n}(X; R_0) \ar[l]^{s^2\d^*\star\d^*} \ar[ul]_(.64){st\d^*_1} \\
&\Omega^0(X; R_\perp)  &\Omega^{n}(X; R_\perp). \ar[l]_{\star}
}
\]
\end{lemma}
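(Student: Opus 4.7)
The plan is to verify the three defining properties of a gauge-fixing operator---that $D^*_\xi$ has cohomological degree $-1$, squares to zero, and produces a well-defined Hodge decomposition so that $\ker(D) \cap \ker(D^*_\xi)$ represents $\mr H^\bullet(\TT_{\phi_0})$ in each degree---and then separately match the degree $0$ kernel to the standard 't Hooft condition, finally extending the construction to $\bb{RP}^1$ by homogenising in $(s,t)$. The first property is immediate from the diagram, since every arrow either shifts one column to the left or moves diagonally one degree up-left.

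For $(D^*_\xi)^2 = 0$, I would proceed block by block. On the pure $\hh$ row the identity reduces to $(\d^*)^2 = 0$ and standard de Rham identities. On the $\hh_\perp$ row, two-step compositions such as $\xi^{-1/2}\d^* \circ (\xi m^2 \star)$ vanish using $\d^*\star = \pm\star\d$ together with Hodge duality; the diagonal contributions involving $\d^*_1, \d^*_2$ and multiplication by $\phi_0$ cancel pairwise, with the $\xi$-powers arranged precisely so that the cross-terms telescope. The $R_0$ and $R_\perp$ rows pair analogously using $\alpha_{\phi_0}$ and the $\phi_0$-action, mirroring the calculations in Proposition~\ref{def_retract_prop}. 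For the Hodge-theoretic requirement, I would compute $[D, D^*_\xi]$ and verify that it is a generalised Laplacian in the sense of \cite[Chapter 7.4]{Costello}: on the block-diagonal pieces one recovers the usual Hodge Laplacian shifted by strictly positive mass terms $\xi m^2$ on $\hh_\perp$ and $m^2$ on $R_\perp$, while the off-diagonal contributions couple the Nakanishi--Lautrup sector with the $(R_0, \hh_\perp)$-ghost pair precisely so that the resulting mass matrix is positive-definite for every $\xi \in (0, \infty)$. Combined with ellipticity of the diagonal, this yields $\ker(D) \cap \ker(D^*_\xi) \cong \mr H^\bullet(\TT_{\phi_0})$ by standard Hodge theory.

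To check compatibility in degree $0$, one writes a field as $A_\hh \oplus A_{\hh_\perp} \oplus \phi_{R_0} \oplus \phi_{R_\perp}$ and reads off from the diagram that $D^*_\xi$ sends it to $(\d^* A_\hh,\; \xi^{-1/2}\d^* A_{\hh_\perp} + \xi^{1/2}\phi_0 \phi_{R_0})$; setting this to zero and multiplying the second entry by $\xi^{1/2}$ gives $\star\d\star A + \xi \langle \phi_0, \phi\rangle = 0$, the 't Hooft condition, after identifying $\phi_{R_0}$ with the $R_0 = \gg\phi_0$ component of $\phi$. For the projective extension I would substitute $\xi = (t/s)^2$: each $\xi^a$ coefficient becomes $(t/s)^{2a}$, and after multiplying the second row through by $s$ and the $R_0$ row by $s^2$ every arrow becomes homogeneous in $(s, t)$. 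The resulting $D^*_{(s:t)}$ depends only on $[s:t] \in \bb{RP}^1$ up to an overall rescaling that affects neither the kernel nor the relation $(D^*_{(s:t)})^2 = 0$. At $[s:t] = [1:\xi^{1/2}]$ we recover $D^*_\xi$, while at $[0:1]$ the $\d^*$ terms all drop out and the degree-$0$ kernel condition collapses to $\langle \phi_0, \phi \rangle = 0$, i.e., the unitarity gauge.

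The main obstacle, I expect, is the Hodge condition rather than the algebraic identity $(D^*_\xi)^2 = 0$: it requires showing that the mass matrix on the coupled sector involving the Goldstone mode $\chi$, its ghost partner, the massive component of the connection, and the associated Nakanishi--Lautrup field is positive-definite for every finite $\xi > 0$, so that no spurious zero modes contaminate $\ker(D)\cap\ker(D^*_\xi)$. The precise balancing of $\xi^{\pm 1/2}$ in the definition is exactly what keeps this matrix non-degenerate; organising the sector into a single block and diagonalising is the computational heart of the argument. The limit $s \to 0$ must then be handled separately, since certain diagonal entries degenerate there while the unitarity-gauge mass matrix remains non-degenerate by virtue of the surviving $\phi_0$ and $\alpha_{\phi_0}$ terms.
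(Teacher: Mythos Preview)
Your plan for verifying $(D^*_\xi)^2 = 0$, matching the degree-$0$ kernel to the 't~Hooft condition, and homogenising to $\bb{RP}^1$ is sound and is essentially what the paper means by ``direct verification.'' (One minor slip: the representative recovering $D^*_\xi$ is $(s,t) = (\xi^{-1/2},\xi^{1/2})$, i.e.\ the projective point $[1:\xi]$, not $[1:\xi^{1/2}]$; your substitution $\xi = (t/s)^2$ is nonetheless the right one.)

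The genuine gap is in your treatment of the Hodge condition. You propose to compute $[D,D^*_\xi]$ and verify that it is a generalised Laplacian in the sense of \cite[Chapter 7.4]{Costello}, and you identify this as the computational heart of the argument. But this step would fail: the commutator $[D,D^*_\xi]$ is \emph{not} a generalised Laplacian. On the $\hh$-row, for instance, $D$ involves the second-order operator $\d\star\d$ and $D^*_\xi$ involves $\d^*\star\d^*$, so their anticommutator contains a differential operator of order~$4$. The paper flags exactly this point in the remark immediately following the lemma, and notes that passing to the first-order formalism for Yang--Mills theory is what resolves it. So your proposed route to the kernel condition $\ker(D)\cap\ker(D^*_\xi)\cong \mr H^\bullet(\TT_{\phi_0})$ via Costello's Hodge-theoretic machinery does not go through as stated; either one must argue the kernel condition directly (the paper simply asserts it under ``direct verification''), or one must first rewrite the theory in first-order form so that $[D,D^*_\xi]$ genuinely becomes second-order.
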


The proof is by direct verification.  
Notice that we have seen the operator $D^*_{(0:1)}$ before: it is exactly the homotopy $h$ appearing in Proposition \ref{def_retract_prop} that implements the deformation retraction!

\begin{remark}
This gauge-fixing does not satisfy the requirement that $[D, D^*_\xi]$ is a generalized Laplacian, as requested by \cite{Costello} to apply his results directly.
Instead, the commutator includes a differential operator of order~4.  
Working in the first order formalism for Yang--Mills theory resolves this issue tidily.
\end{remark}

\appendix

\section{The shifted \texorpdfstring{$L_\infty$}{L infinity} structures after gauge symmetry breaking} \label{bracket_appendix}

Let $\rho$ denote the action of $\gg$ on $R$.
As in Section~\ref{general_gauge_breaking_section}, we use the invariant pairing on $R$ to define an equivariant map $\langle -,-\rangle_\gg \colon R \otimes R \to \gg$.

As an $L_\infty$ algebra, the classical BV complex of Yang--Mills--Higgs theory at the point $(0, \phi_0)$ includes non-trivial quadratic and cubic Lie brackets.  They are given by the action of the ghost $\Omega^0(X)$ on $R$ in the second row, together with additional cubic and quartic terms that can be read off from the action functional.  These terms were also described in the abelian case in \cite[Section 4.3]{Zeitlin}.  

First, there are quadratic Lie brackets coming from the cubic term in the action functional.  These are given by the symmetrizations of the following maps:
\begin{itemize}
\item between connection 1-forms, we have
\[
\begin{array}{ccc}
\Omega^1(X, \gg) \otimes \Omega^1(X, \gg)  &\to &\Omega^{n-1}(x, \gg) \oplus \Omega^n(X; R) \\
A_1 \otimes A_2 &\mapsto &\left( [A_1 \wedge \star \d A_2] + \frac 12 \d \star [A_1 \wedge A_2], \rho(A_1) \wedge \star \rho(A_2)\phi_0 \right)
\end{array}
\]
\item between connection 1-forms and $R$-valued scalar fields, we have
\[
\begin{array}{ccc}
\Omega^1(X, \gg) \otimes \Omega^0(X, R)  &\to& \Omega^{n-1}(x, \gg) \oplus \Omega^n(X; R) \\
A \otimes \phi &\mapsto&  (\langle \phi, \star \rho(A)\phi_0 \rangle,  \d \star \rho(A) \phi )
\end{array}
\]
\item between $R$-valued scalar fields, we have
\[
\begin{array}{ccc}
\Omega^0(X, R) \otimes \Omega^0(X, R)  &\to& \Omega^{n-1}(x, \gg) \oplus \Omega^n(X; R) \\
\phi_1 \otimes \phi_2 &\mapsto& (\langle \phi_1, \star \d \phi_2\rangle_\gg, \rho(\langle \phi_1, \phi_2 \rangle) \phi_0 )
\end{array}
\]
\item between connection 1-forms and their antifields, we have
\[
\begin{array}{ccc}
\Omega^1(X, \gg) \otimes \Omega^{n-1}(X, \gg)  &\to &\Omega^{n-2}(x, \gg) \\
A \otimes A^* &\mapsto &[A \wedge A^*].
\end{array}
\]
\end{itemize}
Note that in the penultimate bracket we take the total symmetrization in $\phi_0, \phi_1$ and $\phi_2$.

There are also cubic $L_\infty$ brackets coming from the quartic term in the action functional.  
These are independent of the choice of solution $\phi_0$ and are given by the total symmetrizations of the following maps:
\[
\begin{array}{ccc}
\Omega^1(X, \gg) \otimes \Omega^1(X, \gg) \otimes \Omega^1(X; \gg)  &\to& \Omega^{n-1}(x, \gg)  \\
A_1 \otimes A_2 \otimes A_3 &\mapsto& [A_1 \wedge \star [A_2 \wedge A_3]]
\end{array}
\]
\[
\begin{array}{ccc}
\Omega^1(X, \gg) \otimes \Omega^1(X, \gg) \otimes \Omega^0(X; R)  &\to&  \Omega^n(X; R) \\
A_1 \otimes A_2 \otimes \phi &\mapsto& \rho(A_1) \wedge \star \rho(A_2)\phi \\
\end{array}
\]
\[
\begin{array}{ccc}
\Omega^1(X, \gg) \otimes \Omega^0(X, R) \otimes \Omega^0(X; R)  &\to& \Omega^{n-1}(x, \gg) \\
A \otimes \phi_1 \otimes \phi_2 &\mapsto& (\langle \phi_1, \star \rho(A)\phi_2 \rangle_\gg \\
\end{array}
\]
\[
\begin{array}{ccc}
\Omega^0(X, R) \otimes \Omega^0(X, R) \otimes \Omega^0(X; R)  &\to& \Omega^n(X; R) \\
\phi_1 \otimes \phi_2 \otimes \phi_3 &\mapsto& \rho(\langle \phi_1, \phi_2 \rangle) \phi_3.
\end{array}
\]
There are no higher order terms in the action functional and hence no higher brackets.

\subsection*{Acknowledgements}
We benefited from discussions with a number of people, notably Ingmar Saberi, Brian Williams, and Philsang Yoo, as we chewed on the Higgs mechanism from a BV perspective.
We are also grateful to Eugene Rabinovich for his helpful comments on an earlier draft.
We thank in particular David Carchedi, who gave us extensive feedback on the draft and has discussed further refinements of this approach, and John Huerta, whose careful reading and thoughtful suggestions caught many issues and improved the paper substantially. Subsequent to publication and as part of a careful MathSciNet review, Francis Bischoff gave us feedback that really improved the discussion of the Higgs mechanism, where multiple revisions had led to mismatches between various parts of the text and hence to a confusing exposition.
We thank him warmly for his help!
The National Science Foundation supported O.G. through DMS Grant No. 1812049.
Any opinions, findings, and conclusions or recommendations expressed in this material are those of the authors and do not necessarily reflect the views of the National Science Foundation.

\pagestyle{bib}
\printbibliography

\textsc{University of Massachusetts, Amherst}\\
\textsc{Department of Mathematics and Statistics, 710 N Pleasant St, Amherst, MA 01003}\\
\texttt{celliott@math.umass.edu}\\
\texttt{gwilliam@math.umass.edu}

\end{document}